\pdfoutput=1

\documentclass[3p,sort&compress]{elsarticle}


\usepackage[usenames,dvipsnames,svgnames,table]{xcolor}
\usepackage{subfig}
\DeclareGraphicsExtensions{.pdf}
\usepackage{amsmath,mathtools,amsthm,amssymb}
\usepackage{placeins}
\usepackage{algorithm}
\usepackage[noend]{algpseudocode}
\usepackage[inline]{enumitem}
\usepackage{siunitx}
\sisetup{group-minimum-digits=4}

\DeclarePairedDelimiter{\parens}{(}{)}

\DeclareMathOperator*{\phaseop}{phase}
\newcommand{\phase}[1]{\phaseop\parens{#1}}
\DeclareMathOperator{\real}{Re}
\DeclareMathOperator{\imag}{Im}
\theoremstyle{plain}

\newtheorem{myprp}{Proposition}
\newtheorem{mylma}{Lemma}
\theoremstyle{remark}

\newcommand{\prn}[1]{\left ( #1 \right )}
\newcommand{\brq}[1]{[ #1 ]}
\newcommand{\prns}[1]{( #1 )}
\newcommand{\sym}[1]{#1^{\prn{\text{s}}}}
\newcommand{\set}[1]{\left \{ #1 \right \}}

\newcommand{\dg}{^{\prn{g}}}
\newcommand{\nr}{_\text{N}}
\newcommand{\dz}{^{\prn{0}}}
\newcommand{\im}{\mathrm{i}}

\newcommand{\R}{\mathbb{R}}
\newcommand{\C}{\mathbb{C}}

\newcommand{\Z}{\mathbb{Z}}

\DeclareMathOperator*{\opU}{U}
\DeclareMathOperator*{\opSO}{SO}
\newcommand{\gU}[1]{\opU\parens*{#1}}
\newcommand{\gSO}[1]{\opSO\parens*{#1}}

\DeclareMathOperator*{\volop}{vol}
\newcommand{\vol}[1]{\volop\parens{#1}}

\newcommand{\colorcomment}{Gray}
\algrenewcommand{\algorithmiccomment}[1]{\hfill\makebox[0.35\columnwidth][l]{{\color{\colorcomment}\scriptsize{$\blacktriangleright$ \textit{#1}}}}}

\algnewcommand\algorithmictake{\textbf{set}}
\algnewcommand\Take{\State \algorithmictake~}

\newcommand{\hadam}{\odot}

\newcommand{\agropt}[1]{#1}
\newcommand{\sepopt}{~}
\newcommand{\sepcon}{~}
\newcommand{\minp}[2]{\min_{#1}\sepopt{\agropt{#2}}}
\newcommand{\minpc}[3]{\min_{#1}\sepopt{\agropt{#2}}\sepcon\text{s.t. }#3}

\DeclarePairedDelimiter\abs{\lvert}{\rvert}
\newcommand{\absl}[1]{\abs*{#1}}

\newcommand{\subT}{_{\prn{T}}}

\newcommand{\btheta}{\boldsymbol\theta}

\DeclareMathOperator*{\argmin}{arg\,min}
\newcommand{\argminp}[2]{\argmin_{#1}\sepopt{\agropt{#2}}}

\newcommand{\subgr}{\mathcal{S}}
\newcommand{\graph}{\mathcal{G}}

\newcommand{\fpart}{f_{A,\bar{A}}}

\makeatletter
\def\ps@pprintTitle{%
 \let\@oddhead\@empty
 \let\@evenhead\@empty
 \def\@oddfoot{\centerline{\thepage}}%
 \let\@evenfoot\@oddfoot}
\makeatother

\input{Plots}

\begin{document}

\begin{frontmatter}

\title{Magnetic Eigenmaps for the Visualization of Directed Networks}
\author{Micha\"el Fanuel\corref{mycorrespondingauthor}}
\author{Carlos M. Ala\'iz}
\author{\'Angela Fern\'andez} 
\author{Johan A. K. Suykens}

\address{KU Leuven, Department of Electrical Engineering (ESAT), Kasteelpark Arenberg 10, B-3001 Leuven, Belgium}

\begin{abstract}
We propose a framework for the visualization of directed networks relying on the eigenfunctions of the magnetic Laplacian, called here Magnetic Eigenmaps.
The magnetic Laplacian is a complex deformation of the well-known combinatorial Laplacian. Features such as density of links and directionality patterns are revealed by plotting the phases of the first magnetic eigenvectors. An interpretation of the magnetic eigenvectors is given in connection with the angular synchronization problem.
Illustrations of our method are given for both artificial and real networks.
\end{abstract}

\begin{keyword}
Magnetic Eigenmaps \sep Magnetic Laplacian \sep Directed Graph \sep Data Visualization
\end{keyword}

\end{frontmatter}

\tikzmodearxiv

\section{Introduction}
\label{SecIntro}

Many problems in neuroscience, biology, social or computer science are phrased in terms of networks and graphs.
The embedding of data points forming undirected graphs can be performed using manifold learning methods, among which are the so-called Laplacian Eigenmaps~\cite{BelkinEigenMaps} and Diffusion Maps~\cite{Coifman05geometricdiffusions}. In the same spirit, the embedding of a directed graph originating from the sampling of a vector field on a manifold  was studied in~\cite{PerraultDiscrete}. A Laplacian for strongly connected and aperiodic directed networks was introduced by Chung~\cite{Chung05laplacian} in relation with a random walk process, which was used for visualization e.g. in~\cite{Skillicorn}. Actually, Laplacians are very useful tools for community detection and data visualization. A common feature of these approaches is the relevance of the discrete or combinatorial Laplacian, and its normalized versions.

In this letter, no assumption on the origin of directed networks is needed, so we could deal, for example, with networks of webpages which are not embedded in any vector space.
In particular, we propose here the use of another Laplacian which naturally exists for a general connected directed network, called the magnetic Laplacian. This operator is actually a vector bundle Laplacian as described in~\cite{KenyonVectorBundle,FormanDeterminant} and a Connection Laplacian~\cite{SingerWu}. Interestingly, the magnetic Laplacian can be interpreted as a discrete quantum mechanical Hamiltonian of a charged particle on a network, influenced by a magnetic flux~\cite{Shubin,deVerdiere,Berkolaiko}. The method that we describe assigns a complex rotation, i.e., an element of $\gU{1}$, to each directed link, and the orientation of the link determines the direction of the rotation~\cite{CucuringuSyncRank}.

This letter is organized as follows. In Section~\ref{SecMagEig} the magnetic Laplacian and its eigenvectors are introduced. A method using the complex phase of these eigenvectors for visualizing directed graphs is proposed in Section~\ref{SecVisual}. Some examples are shown in Section~\ref{SecApp}, and the letter ends with some conclusions in Section~\ref{SecConcl}.

\section{Magnetic Laplacian and Eigenmaps}
\label{SecMagEig}

Consider a connected graph $\graph = (V,E)$ with a set of $N$ nodes $V$ and a set of undirected edges $E$.
In the case of an undirected graph, a symmetric weight matrix $\sym{W}$ is given with elements $\brq{\sym{W}}_{ij} = \sym{w}_{ij} \geq 0$ for all $i$ and $j\in V$. The Laplacian Eigenmaps are the eigenvectors of the combinatorial Laplacian $L\dz = D - \sym{W}$, where $D$ is the diagonal degree matrix with matrix elements $\brq{D}_{ii}  = d_i= \sum_{j \in V} \sym{w}_{ij}$ for all $i \in V$. The volume of a subgraph $\subgr_A$ of $\graph$ with node set $A$ is simply $\vol{\subgr_A} = \sum_{i \in A} d_i$.
In the case of directed networks, the graph is given by an asymmetric weight matrix $W$ with elements $\brq{W}_{ij} = w_{ij} \ge 0$. For simplicity, the  weights are chosen to be binary, i.e., $w_{ij} = 1$ if there is a link from $i$ to $j$ and $w_{ij} = 0$ otherwise. The weight matrix $W$ can be decomposed into a symmetric term  $\sym{w}_{ij} = \prn{w_{ij} + w_{ji}} / 2$, indicating that $\set{i, j}\in E$, and a skew-symmetric term, the edge flow $a_{ij} = - a_{ji}$ encoding the direction of the link. For all $\set{i,j}\in E$, we have $a_{ij} = 1$ if the link points from $i$ to $j$, and $a_{ij} = 0$ if $\set{i,j}$ is not directed.

In this letter, the magnetic Laplacian is defined as the self-adjoint, positive semi-definite operator $L\dg = D - T\dg \hadam \sym{W}$, where $D$ is the degree matrix associated with the symmetrized weight matrix, $0 \leq g < 1 / 2$ is an electric charge parameter, and $\brq{T\dg \hadam \sym{W}}_{ij} =\exp\prn{\im 2 \pi g a_{ji}} \sym{w}_{ij}$ (notice the Hadamard product $\odot$). The solutions of the generalized eigenvalue problem $L\dg \phi = \lambda D \phi$ are the \emph{Magnetic Eigenmaps} $\phi\dg_k$  associated with the eigenvalues $\lambda\dg_k \geq 0$ for $k \in \set{0, \dots, N-1}$~\cite{MagneticEigenMapsCommunityDetection} (we assume $\lambda\dg_0 \leq \lambda\dg_1 \leq \dots \leq \lambda\dg_{N-1}$).

\subsection{Interpretation of the first eigenvectors}

While the calculation of the \emph{second} eigenvector of the normalized combinatorial Laplacian is a relaxation of the (normalized) cut problem, the calculation of the \emph{first} eigenvector of the normalized magnetic Laplacian is a relaxation of the angular synchronization problem~\cite{singer2011angular}.
Given a subgraph $\subgr$ of $\graph$ (in general, one can choose $\subgr = \graph$), the angular synchronization problem consists in finding the angles $\btheta^\star = \prns{\theta^\star_1, \dotsc ,\theta^\star_N}^\intercal \in \gU{1}^{N}$ given by $\btheta^\star \in \argminp{\btheta}{\eta_{\subgr}\prns{\btheta}}$ where the frustration~\cite{bandeira2013cheeger} is defined by
\begin{equation*}
 \eta_{\subgr}\prns{\btheta} = \frac{1}{2} \frac{\sum_{i, j \in \subgr} \sym{w}_{ij} \absl{e^{\im \theta_i} - e^{\im \theta_{ij}} e^{\im \theta_j}}^2}{\sum_{i \in V} d_i} ,
\end{equation*}
with $\theta_{ij} = 2 \pi g a_{ji}$ for all $i, j \in \subgr$ such that $\sym{w}_{ij} \neq 0$. Notice that $\sum_{i \in V} d_i = \vol{\graph}$. The lowest eigenvector of the normalized magnetic Laplacian $\phi\dg_0$ is the solution of the spectral problem relaxing $\minp{\btheta}{\eta_{V}\prns{\btheta}}$. Our first conclusion is that computing the complex phase of $\phi\dg_0$ yields an approximation of $\btheta^\star$ that we propose to choose as the first visualization coordinate. In~\cite{CucuringuSyncRank}, the solution of the angular synchronization problem is shown to provide a ranking of the nodes in directed graphs, although a slightly different eigenvector problem is considered.

The performance of the spectral relaxation of the cut problem can be studied using a classical result of spectral graph theory, the Cheeger inequality, which relates the Cheeger constant to the \emph{second} smallest eigenvalue of the combinatorial Laplacian, providing the worst case performance for the spectral clustering method. Analogous results relate the \emph{first} smallest eigenvalue of the Connection Laplacian~\cite{bandeira2013cheeger} and the magnetic Laplacian~\cite{ShipingLiu} to a frustration quantifying the amount of inconsistency in the connection graph.
In particular, the performance naturally depends on the inverse of the spectral gap $1 / \lambda\dz_1$ of the undirected measurement graph. Indeed, the quality of the synchronization will benefit from a good connectivity of the nodes in the measurement graph. On the contrary, if $\lambda\dz_1$ is small, it could be instructive to find the subgraphs of $\graph$ where the frustration is minimal by cutting edges where the angular synchronization is not accurate.
Suppose that $A \subset V$ is the vertex set of a subgraph $\subgr_A$ of $\graph$ and let $\bar{A}$ be its complement in $V$.
A combinatorial graph partitioning problem is proposed, that is, $\minp{A}{E_{A, \bar{A}}\prns{\btheta^\star}}$ with
\begin{equation}
\label{eq:ProblemGeneralizedNCut}
 E_{A, \bar{A}}\prns{\btheta^\star} = \frac{\vol{\subgr_{\bar{A}}}}{\vol{\graph}} \eta_{A}\prns{\btheta^\star} + \frac{\vol{\subgr_A}}{\vol{\graph}} \eta_{\bar{A}}\prns{\btheta^\star} + \prn{\frac{c_{A, \bar{A}}}{\vol{\subgr_A}} + \frac{c_{\bar{A}, A}}{\vol{\subgr_{\bar{A}}}}} + \frac{\gamma_{A,\bar{A}}\prns{\btheta^\star}}{\vol{\graph}} ,
\end{equation}
where $\vol{\subgr_A} = \sum_{i \in A} d_i$ is the volume of $\subgr_A$, whereas the cut is $c_{A, \bar{A}} = c_{\bar{A}, A} = \sum_{i \in A, j \in \bar{A}} \sym{w}_{ij}$.
We have defined the generalized cut
\begin{equation}
\label{eq:GeneralizedCut}
 \gamma_{A,\bar{A}}\prns{\btheta^\star} = - 4 \sum_{i \in A, j \in \bar{A}} \sym{w}_{ij} \sin^2\prn{\frac{\theta^\star_i - \theta^\star_j - \theta_{ij}}{2}} .
\end{equation}
Notice that each term of~\eqref{eq:GeneralizedCut} is minimized when $\abs{\theta^\star_i - \theta^\star_j - \theta_{ij}} = \pi$, i.e., when the error made in the synchronization of rotations along the link $\set{i, j}$ is large.
Problem~\eqref{eq:ProblemGeneralizedNCut} generalizes the normalized cut problem aiming to find $A$ and $\bar{A}$ so that a combination of the frustration of both subgraphs, the normalized cut and the generalized cut~\eqref{eq:GeneralizedCut} is minimal, while the partition is balanced.
In order to construct a relaxation of this problem, we define
\begin{equation}
\label{eq:CutFunction}
 \prn{\fpart}_{i} = \begin{cases}
                    \hphantom{+} \sqrt{\vol{\subgr_{\bar{A}}}/\vol{\subgr_A}} e^{\im \theta^\star_i} & \text{ if } i \in A , \\
                    - \sqrt{\vol{\subgr_A}/\vol{\subgr_{\bar{A}}}} e^{\im \theta^\star_i} & \text{ if } i \in \bar{A} .
                   \end{cases}
\end{equation}
Then, we have $E_{A, \bar{A}}\prns{\btheta^\star} = \prns{\fpart^\dagger L\dg \fpart}/\prns{\fpart^\dagger D \fpart}$ as well as the relations $\sum_{i \in V} d_i e^{- \im \theta^\star_i} \prns{\fpart}_{i} = 0$ and $\sum_{i \in V} d_i \abs{\prns{\fpart}_{i}}^2 = \vol{\graph}$. As a consequence, a spectral relaxation of the combinatorial problem is given by
\begin{equation}
 \minpc{f\in \C^{N}_0}{\frac{f^\dagger L\dg f}{f^\dagger D f}}{f^\dagger D\phi\dg_0 = 0},
\end{equation}
which corresponds to the generalized eigenvalue problem for the second least eigenvector $\phi\dg_1$. In view of~\eqref{eq:CutFunction}, we conclude that $\phase{\phi\dg_0}$ will instruct us about the partition of the graph minimizing~\eqref{eq:ProblemGeneralizedNCut}. Indeed, we have $\phase{\prn{\fpart}_{i}} = \theta^\star_i$ if $i \in A$ and $\phase{\prn{\fpart}_{i}} = \theta^\star_i + \pi$ if $i \in \bar{A}$. The eigenvector $\phi\dg_{1}$ being the relaxed version of $\prn{\fpart}_{i}$, we expect that, by embedding the graph with $\phase{\phi\dg_0}$ as first coordinate and $\phase{\phi\dg_1}$ as second coordinate, we obtain two parallel groups distant by $\pi$.
By analogy with the Laplacian eigenmaps, similar results are expected for the next eigenvalues.

\subsection{Relation with the combinatorial Laplacian}

When the edge flow of the network is given exactly by a certain potential $h$, i.e., $a_{ij} = h_j - h_i$ for all $\set{i ,j} \in E$, the spectrum of the magnetic Laplacian corresponds to the spectrum of the combinatorial Laplacian, and the eigenvectors are related by $\phi\dg_{k,i} = \exp\prns{\im 2 \pi g h_i} \phi\dz_{k,i}$ for all $i \in V$ and all $k \in\set{0,\dots, N-1}$. This particular case is characterized by the first eigenvalue, as stated in the following proposition, which can be seen as a consequence of~\cite[Theorem~2.6]{bandeira2013cheeger} and of~\cite{Berkolaiko,deVerdiere} in the context of mathematical physics.
\begin{myprp}
\label{ThZeroEig}
Consider a connected graph $\graph$. The magnetic Laplacian $L\dg$ has a zero eigenvalue iff there exists a function $h$ satisfying, for any link $\set{i,j}\in E$, $a_{ij} = h_j - h_i$.
\end{myprp}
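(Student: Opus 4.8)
The plan rests on a Dirichlet-type identity for the magnetic Laplacian. First I would check, by expanding the Hadamard product and using the symmetry of $\sym{W}$ together with the skew-symmetry $a_{ij} = -a_{ji}$, that for every $\phi \in \C^N$,
\begin{equation*}
  \phi^\dagger L\dg \phi = \frac{1}{2} \sum_{i,j \in V} \sym{w}_{ij} \, \absl{\phi_i - e^{\im 2 \pi g a_{ji}} \phi_j}^2 .
\end{equation*}
The only thing to verify is that the cross term is real: relabelling $i \leftrightarrow j$ and using $\sym{w}_{ij} = \sym{w}_{ji}$ with $a_{ij} = -a_{ji}$ leaves $\sum_{i,j} \sym{w}_{ij} \overline{\phi_i} e^{\im 2 \pi g a_{ji}} \phi_j$ invariant, so it reproduces $\phi^\dagger \prn{T\dg \hadam \sym{W}} \phi$ and the squared-modulus sum collapses to $\phi^\dagger D \phi - \phi^\dagger \prn{T\dg \hadam \sym{W}} \phi = \phi^\dagger L\dg \phi$. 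This re-proves $L\dg \succeq 0$; since $D$ is positive definite, $\lambda\dg_0 = 0$ exactly when some nonzero $\phi$ satisfies $L\dg \phi = 0$, i.e. $\phi^\dagger L\dg \phi = 0$, i.e. the edge conditions $\phi_i = e^{\im 2 \pi g a_{ji}} \phi_j$ hold for all $\set{i,j} \in E$.

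For the ``if'' direction, given $h$ with $a_{ij} = h_j - h_i$ on every edge, set $\phi_i = e^{\im 2 \pi g h_i}$. Since $a_{ji} = h_i - h_j$, each edge condition reads $e^{\im 2 \pi g h_i} = e^{\im 2 \pi g \prn{h_i - h_j}} e^{\im 2 \pi g h_j}$, which holds identically; hence every summand vanishes and this nonzero $\phi$ is a zero eigenvector. (It is exactly the $\phi$ implementing the spectral equivalence $\phi\dg_{k,i} = e^{\im 2 \pi g h_i} \phi\dz_{k,i}$ mentioned just above the proposition.)

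For the ``only if'' direction, take a nonzero $\phi$ satisfying the edge conditions. Passing to moduli gives $\absl{\phi_i} = \absl{\phi_j}$ on every edge, so connectedness of $\graph$ forces $\absl{\phi_i} = r > 0$ constant; writing $\phi_i = r e^{\im \psi_i}$ turns the conditions into $\psi_i - \psi_j \equiv 2 \pi g a_{ji} \pmod{2\pi}$. Fix a spanning tree of $\graph$, root it, set $h = 0$ at the root, and propagate $h_v = h_u + a_{uv}$ along tree edges; this yields an integer-valued $h$ with $a_{ij} = h_j - h_i$ on all tree edges. The remaining --- and only genuinely delicate --- step is to extend this to the non-tree edges, equivalently to show that $a$ sums to zero around every cycle of $\graph$. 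Summing $\psi_v - \psi_u \equiv 2 \pi g a_{uv} \pmod{2\pi}$ around a cycle makes the left side telescope to $0$, so $g$ times the (integer) cycle-sum of $a$ lies in $\Z$; provided $e^{\im 2 \pi g}$ is not a root of unity this forces the cycle-sum to vanish, whence $a$ is a gradient and the tree-built $h$ works on all of $E$. Note that some such condition on $g$ is unavoidable here: a directed triangle with $g = 1/3$ has $\lambda\dg_0 = 0$ even though its edge flow is not a gradient. Apart from this last point, the argument is just the bookkeeping of the Dirichlet identity.
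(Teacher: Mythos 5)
Your argument is correct, and it is worth noting that the paper itself offers no proof of Proposition~\ref{ThZeroEig}: it is stated as a consequence of \cite[Theorem~2.6]{bandeira2013cheeger} and of the mathematical-physics references. Your route --- the Dirichlet identity $\phi^\dagger L\dg \phi = \tfrac{1}{2}\sum_{i,j}\sym{w}_{ij}\absl{\phi_i - e^{\im 2\pi g a_{ji}}\phi_j}^2$, the observation that kernel elements have constant modulus on a connected graph, and the construction of the potential $h$ along a spanning tree --- is the natural self-contained substitute and is the standard way such consistency statements are proved.

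The substantive point is the caveat you flag at the end, and you are right to insist on it: as written, for the paper's full parameter range $0 \le g < 1/2$, the ``only if'' direction is false. Your telescoping step only yields $g \oint_C a \in \Z$ for every cycle $C$, and since $\oint_C a$ is an integer this forces $\oint_C a = 0$ only when $g$ is irrational (more precisely, when $gn \notin \Z$ for every nonzero integer $n$ achievable as a cycle sum). Your counterexample is genuine: for a directed triangle with $g = 1/3$ the holonomy is $e^{\im 2\pi (1/3)\cdot 3} = 1$, so $\lambda\dg_0 = 0$ by Lemma~\ref{TreeLemma}, yet $a$ is not a gradient; $g = 0$ is an even more trivial counterexample inside the stated range. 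This is a defect of the proposition's formulation rather than of your proof: the version that holds for every $g$ characterizes $\lambda\dg_0 = 0$ by triviality of the holonomy, i.e. $g\oint_C a \in \Z$ around every cycle (equivalently, existence of $h$ with $g a_{ij} \equiv h_j - h_i \pmod{1}$), which is what the consistency criterion of \cite{bandeira2013cheeger} actually gives; the stronger conclusion $a_{ij} = h_j - h_i$ follows only for generic $g$. Since the values $g = 1/3, 1/4, 2/5$ recommended in Section~\ref{SecSG} are precisely ones for which the two conditions can differ, the caveat is not academic. Your proof is complete once the hypothesis on $g$ (or the weaker flux condition as the conclusion) is made explicit.
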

Nonetheless, since we assume here that $a_{ij} \in \set{-1, 0, 1}$ for all $\set{i,j} \in E$, this situation only happens in our context if the graph is a tree.
The relation in the general case can be characterized by Lemma~\ref{TreeLemma}.
First let us define the parallel transporter $t\dg_P \in \gU{1}$ over a finite path $P = \set{i_1, i_2, \dotsc, i_n}$ of length $n > 1$ in the graph $\graph$ as $t\dg_P = \exp\prns{\im 2 \pi g \prns{a_{i_1 i_2} + \dots + a_{i_{n-1} i_n}}}$.
Moreover, following~\cite{ChungKempton} we call a directed graph $\epsilon$-consistent if, for every simple cycle $\set{i_1, i_2, \dots, i_n, i_{n+1} = i_1}$, we have $\abs{t\dg_C - 1} \leq \epsilon$, with $t\dg_C = \exp\prn{\im 2 \pi g \oint_C a}$, where the magnetic flux is defined by the discrete line integral $\oint_C a \triangleq a_{i_1 i_2} + \dots + a_{i_n i_1}$. We now state an elementary result given in~\cite{Berkolaiko}.
\begin{mylma}
\label{TreeLemma}
 Consider a connected directed graph $\graph$, and let $T$ be any spanning tree of $\graph$, and $\bar{T} = E \setminus T$. For all $0 \leq g \leq 1 / 2$, the magnetic Laplacian $L\dg$ is unitarily equivalent to the operator given by
 \begin{equation*}
 \label{eq:UnitaryTransf}
  \prn{\tilde{L}\dg\subT f}_i = \sum_{\set{j | \set{i,j} \in T}} \sym{w}_{ij} \prn{f_i - f_j} + \sum_{\set{j | \set{i,j} \in \bar{T}}} \sym{w}_{ij} \prn{f_i - t\dg_{\circ ,ij} f_j} ,
 \end{equation*}
 for all $f_i \in \C$ and all $i \in V$. The operator $t\dg_{\circ, ij}$ is defined as $t\dg_{\circ, ij} = t\dg_{i_0 \to i} t\dg_{ij} t\dg_{j \to i_0}$ for any $i_0 \in V$, and $i_0 \to k$ denotes the unique path in the tree $T$ going from $i_0 \in T$ to $k \in T$. The unitary transformation, realizing $\tilde{L}\dg\subT = U\subT^\dagger L\dg U\subT$, is given by the diagonal matrix $U\subT$, with elements $\brq{U\subT}_{ii} = t\dg_{i_0 \to i}$ for all $i \in T$.
\end{mylma}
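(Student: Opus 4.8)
The plan is to recognize this statement as a discrete gauge transformation: the diagonal matrix $U\subT$ encodes, at each node, the parallel transport from the fixed root $i_0$ along the spanning tree, and conjugation by it ``gauges away'' the phases on $T$ while leaving, on each remaining edge, the holonomy around the associated fundamental cycle. So I would first record that, because $a_{ij}\in\set{-1,0,1}$ and $g$ is real, every transporter $t\dg_P$ is a genuine element of $\gU{1}$; hence $\absl{\brq{U\subT}_{ii}} = \absl{t\dg_{i_0\to i}} = 1$, $U\subT$ is unitary, and $\brq{U\subT^\dagger}_{ii} = \overline{t\dg_{i_0\to i}}$ is the transporter along the reversed tree path $i\to i_0$ (reversing a path inverts, i.e.\ conjugates, its $\gU{1}$ value). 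Since $U\subT$ is diagonal, $\brq{U\subT^\dagger L\dg U\subT}_{ij} = \overline{\brq{U\subT}_{ii}}\,\brq{L\dg}_{ij}\,\brq{U\subT}_{jj}$, so the whole proof reduces to evaluating this product entry by entry.

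For the diagonal I would simply note that $\brq{L\dg}_{ii} = d_i \in \R$ and $\overline{\brq{U\subT}_{ii}}\brq{U\subT}_{ii} = 1$, so $\brq{U\subT^\dagger L\dg U\subT}_{ii} = d_i$, and split $d_i = \sum_{\set{j\mid\set{i,j}\in T}}\sym{w}_{ij} + \sum_{\set{j\mid\set{i,j}\in\bar{T}}}\sym{w}_{ij}$ to produce the $\sym{w}_{ij}f_i$ terms. For an edge $\set{i,j}\in E$, the entry $\brq{L\dg}_{ij}$ carries the one-step transporter along $\set{i,j}$, so after conjugation it equals $-\sym{w}_{ij}$ times the transporter of the closed walk ``from $i_0$ to $i$ along $T$, then the edge $i\to j$, then from $j$ to $i_0$ along $T$''. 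Using that transporters are multiplicative along concatenations and that $\gU{1}$ is abelian, I would distinguish two cases. If $\set{i,j}\in T$, the return leg $j\to i_0$ begins by traversing $\set{i,j}$ backwards, so the edge is crossed once each way and cancels; the walk collapses to $i_0\to i\to i_0$ in $T$, whose transporter is $1$, and the entry becomes $-\sym{w}_{ij}$ — the phase is gone on every tree edge. If $\set{i,j}\in\bar{T}$, the common initial segments of the two tree paths cancel and one is left with the tree path $j\to i$ followed by the edge $i\to j$, i.e.\ precisely the holonomy $t\dg_{\circ,ij}$ around the fundamental cycle of $\set{i,j}$; written this way it is visibly independent of $i_0$, so the phrase ``for any $i_0\in V$'' is automatic. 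Then the entry is $-\sym{w}_{ij}t\dg_{\circ,ij}$.

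Putting the pieces together would give $\prn{U\subT^\dagger L\dg U\subT f}_i = d_i f_i - \sum_{\set{j\mid\set{i,j}\in T}}\sym{w}_{ij}f_j - \sum_{\set{j\mid\set{i,j}\in\bar{T}}}\sym{w}_{ij}t\dg_{\circ,ij}f_j$, which after inserting the decomposition of $d_i$ is exactly the asserted formula for $\tilde{L}\dg\subT$. I expect the only real obstacle to be the orientation bookkeeping: one must chain the directions of the transporters occurring in $\brq{L\dg}_{ij}$, in $\brq{U\subT}_{ii}$, and in the definition of $t\dg_{\circ,ij}$ consistently, so that the phases telescope to $1$ on $T$ and reconstitute the cycle holonomy $t\dg_{\circ,ij}$ — and not its conjugate — on $\bar{T}$. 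Once the sign conventions are pinned down, the remaining computation is routine and amounts to the familiar statement that a $\gU{1}$-connection (here a Connection Laplacian) can always be put into a gauge that is trivial on a chosen spanning tree.
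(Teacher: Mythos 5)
Your proof is correct. Note that the paper itself does not prove this lemma---it is stated as ``an elementary result given in [Berkolaiko]''---so there is no in-paper argument to compare against; your direct computation, $\brq{U\subT^\dagger L\dg U\subT}_{ij} = \overline{\brq{U\subT}_{ii}}\,\brq{L\dg}_{ij}\,\brq{U\subT}_{jj}$ with telescoping of the tree-path transporters, is exactly the standard gauge-fixing argument that the cited reference uses. The one issue you rightly flag, the orientation bookkeeping, is genuinely delicate here because the paper is internally ambiguous about it: the off-diagonal entry of $L\dg$ carries $e^{\im 2\pi g a_{ji}}$ while the path transporter $t\dg_P$ is built from $a_{i_1 i_2}$, so whether one lands on $t\dg_{\circ,ij}$ or its conjugate depends on which convention the symbol $t\dg_{ij}$ in the definition of $t\dg_{\circ,ij}$ follows; since these differ by complex conjugation (equivalently, by swapping $i$ and $j$) and the resulting operator is Hermitian either way, this does not affect the validity of your argument.
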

Notice that the accumulated complex phase $t\dg_{\circ, ij}$ (holonomy or magnetic flux) in the contour integral does not depend on the choice of $i_0$. Its value is equal to the contour integral on the smallest loop including the link $\set{i, j}$.
Moreover, as a corollary, we have that the normalized magnetic Laplacian $L\dg\nr = D^{-1/2} L\dg D^{-1/2}$ is unitarily equivalent to $D^{-1/2} \tilde{L}\dg\subT D^{-1/2}$.
Hence, if the graph $\graph$ is a tree, the spectrum of the magnetic Laplacian is exactly the spectrum of the combinatorial Laplacian and their eigenvectors are in one to one correspondence, thanks to diagonal unitary transformation.

\section{Visualization of density and directionality}
\label{SecVisual}

The eigenmaps can be calculated by computing the eigenvectors of the normalized magnetic Laplacian $L\dg\nr = D^{-1/2} L\dg D^{-1/2}$ and, therefore, these eigenvectors are obtained by calculating the largest eigenvalues of $D^{-1/2} T\dg \hadam \sym{W} D^{-1/2}$.
Once the Laplacian is normalized, we propose to embed the network by the mapping $i \mapsto \prns{\phase{\phi_{0,i}\dg}, \phase{\phi_{1,i}\dg}, \dotsc, \phase{\phi_{n,i}\dg}}^\intercal$.
Notice that the phase operator identifies the angles that differ by $2 \pi$, which means that the geometrical representation, for the $1$-dimensional case, is just a circle, whereas for the general $n$-dimensional case is an $n$-torus.
Hence, for visualization purposes, the directed network will be embedded on a $2$-torus represented as the square $[0, 2\pi] \times [0, 2\pi]$ with opposite sides identified, as shown in Figure~\ref{FigSchemes}.
Therefore, the visualization will be symmetric if an eigenvector undergoes a global rotation in the complex plane. Notice that given an eigenvector $\phi_k\dg$ of the magnetic Laplacian, another eigenvector of the same eigenvalue is given by $e^{\im \alpha} \phi\dg_k$.
We will show empirically that this low dimensional embedding is able to visualize at the same time dense regions of links, revealed by the $y$-axis, and patterns determined by the link directions, given by the $x$-axis.

\begin{figure}
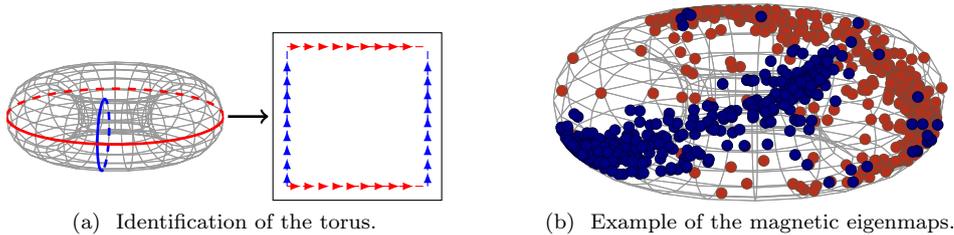

 \centering
 \subfloat[\label{FigUnrolledTorus} Identification of the torus.]{\tikzwidth{0.45\textwidth}\plottorus{}{}\includetikz{TorusUnrolled}}\quad%
 \subfloat[\label{FigPolBlogsTorus} Example of the magnetic eigenmaps.]{\begin{minipage}[b]{0.45\textwidth}\tikzwidth{0.8\textwidth}\centering\includetikzex{\plotovertorus{./Images/Data/Torus_polBlogs_14.txt}}\end{minipage}}
 \caption{\label{FigSchemes} \protect\subref{FigUnrolledTorus}: Representation of the $2$-torus as the square $[0, 2\pi] \times [0, 2\pi]$ with identified sides; the position of the cuts is arbitrary and can be adapted to each particular dataset.
 \protect\subref{FigPolBlogsTorus}: Example of the magnetic eigenmaps plotted over the $3$-dimensional $2$-torus for a graph with two communities, corresponding to the dataset ``Political Blogosphere'' explained in detail in Section~\ref{SecApp}.}
\end{figure}

\subsection{Consistency of the mapping}

The visualization method relies only on the phases of the magnetic eigenmaps, which contain most of the information if the magnetic Laplacian is unitarily equivalent to the combinatorial Laplacian.
We will characterize now the error made if this is not the case. A bound on the variance of $\abs{\phi\dg_0}$ is obtained from \cite[Lemma~3.3]{bandeira2013cheeger} that is adapted below to the setting of this paper. For all $z \in \C^N$, we define $\tilde{z} \in \C^N$ such that $\tilde{z}_i = z_i / \abs{z_i}$ if $z_i \neq 0$ and $\tilde{z}_i = 0$ if $z_i = 0$.
\begin{mylma}[Bandeira \emph{et al.}~{\cite[Lemma~3.3]{bandeira2013cheeger}}]
 For all $z\in \C^N$, we have $\sum_i d_i \abs{z_i - \mu \tilde{z}_i}^2 \leq \eta\prn{z} \sum_i d_i \abs{z_i}^2 / \lambda_1\dz$, with $\mu = \prns{1/\vol{\graph}} \sum_j d_j \abs{z_j}$ and $\eta\prns{z} = z^\dagger L\dg z / \prns{z^\dagger D z}$.
\end{mylma}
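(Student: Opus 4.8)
The plan is to remove the phases of $z$, which turns the left-hand side into the $D$-weighted variance of the moduli $\abs{z_1},\dots,\abs{z_N}$, and then to estimate that variance by a Poincar\'e inequality for the combinatorial Laplacian $L\dz$ whose Dirichlet form is, in turn, dominated edge by edge by that of $L\dg$. For the first step, observe that whenever $z_i\neq0$ we have $z_i-\mu\tilde{z}_i=\prns{\abs{z_i}-\mu}\tilde{z}_i$, hence $\absl{z_i-\mu\tilde{z}_i}^2=\prns{\abs{z_i}-\mu}^2$, while for $z_i=0$ the left-hand term equals $0\leq\mu^2=\prns{\abs{z_i}-\mu}^2$. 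Summing with weights $d_i$ gives $\sum_i d_i\absl{z_i-\mu\tilde{z}_i}^2\leq\sum_i d_i\prns{\abs{z_i}-\mu}^2$, so it suffices to bound the latter.

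Next I would apply the variational principle for $\lambda\dz_1$. Set $x_i=\abs{z_i}\in\R$. Since $\mu$ is precisely the $D$-weighted mean of $x$, one has $\sum_i d_i\prns{\abs{z_i}-\mu}^2=x^\intercal D x-\prns{\sum_i d_i x_i}^2/\vol{\graph}$, which is the squared $D$-norm of the projection of $x$ onto the $D$-orthogonal complement of the constant vector (the generalized eigenvector for $\lambda\dz_0=0$). The min--max characterization of the generalized eigenvalue problem $L\dz x=\lambda D x$ then yields
\begin{equation*}
 \sum_i d_i\prns{\abs{z_i}-\mu}^2\ \leq\ \frac{1}{\lambda\dz_1}\,x^\intercal L\dz x\ =\ \frac{1}{\lambda\dz_1}\,\frac{1}{2}\sum_{i,j}\sym{w}_{ij}\prns{\abs{z_i}-\abs{z_j}}^2 .
\end{equation*}

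Finally I would compare the two Dirichlet forms. Expanding $L\dg=D-T\dg\hadam\sym{W}$, using $d_i=\sum_j\sym{w}_{ij}$ and the skew-symmetry $a_{ij}=-a_{ji}$ (so that the relevant double sum is real), one obtains the edge-sum identity $z^\dagger L\dg z=\tfrac12\sum_{i,j}\sym{w}_{ij}\absl{z_i-e^{\im2\pi g a_{ji}}z_j}^2$, of which the case $g=0$ is the identity used above for $L\dz$. Because $e^{\im2\pi g a_{ji}}$ has unit modulus, the reverse triangle inequality gives, for each ordered pair, $\absl{\abs{z_i}-\abs{z_j}}=\absl{\abs{z_i}-\absl{e^{\im2\pi g a_{ji}}z_j}}\leq\absl{z_i-e^{\im2\pi g a_{ji}}z_j}$, whence $\tfrac12\sum_{i,j}\sym{w}_{ij}\prns{\abs{z_i}-\abs{z_j}}^2\leq z^\dagger L\dg z$. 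Chaining this with the displayed inequality and with $z^\dagger L\dg z=\eta\prn{z}\,z^\dagger D z=\eta\prn{z}\sum_i d_i\abs{z_i}^2$ (the definition of $\eta$) delivers the stated bound; the case $z=0$ is trivial, and otherwise $z^\dagger D z>0$ since $\graph$ is connected.

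The argument is elementary, and the one step that carries the content is the term-by-term domination of the combinatorial Dirichlet form by the magnetic one: it works precisely because both forms are sums over edges weighted by the $\gU{1}$ phase $e^{\im2\pi g a_{ji}}$, and multiplying $z_j$ by such a unit-modulus factor does not change $\abs{z_j}$, so the reverse triangle inequality applies on each edge separately. The remaining ingredients---the phase removal and the Poincar\'e inequality---are standard; one should only keep in mind that the bound is informative only when $\lambda\dz_1>0$, i.e.\ for a connected measurement graph, which is assumed.
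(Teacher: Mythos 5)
Your proof is correct: the phase-removal step, the Poincar\'e inequality for the generalized eigenproblem $L\dz x=\lambda Dx$ applied to the modulus vector, and the edge-wise reverse-triangle-inequality comparison of the two Dirichlet forms are all sound, and together they give exactly the stated bound. The paper itself offers no proof but defers to Bandeira \emph{et al.}, whose argument is essentially the same three steps, so your proposal matches the intended route.
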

Hence, by taking $z = \phi\dg_0$, we have a bound on the variability of $\abs{\phi\dg_0}$. More precisely, the variation $\abs{\phi\dg_0}$ with respect to its mean value is constrained as follows:
\begin{equation}
\label{eq:BoundLambda0}
 \frac{\sum_i d_i \absl{\abs{\phi\dg_{0,i}} - \mu_0}^2}{\sum_i d_i \abs{\phi\dg_{0,i}}^2} \leq \frac{\lambda\dg_0}{\lambda\dz_1}, \text{ with } \mu_0 = \frac{1}{\vol{\graph}} \sum_j d_j \abs{\phi\dg_{0,j}} .
\end{equation}
Therefore, since the spectral gap $\lambda\dz_1$ is only determined by the density of the undirected graph, the previous bound on the variability of the modulus $\abs{\phi\dg_0}$ can only be reduced if the eigenvalue $\lambda\dg_0$ is made smaller.
In particular, Theorem~1 in~\cite{ChungKempton} can be adapted to the case of the magnetic Laplacian, so that the smallest eigenvalue of the magnetic Laplacian of an $\epsilon$-consistent graph satisfies $\lambda\dg_0 \leq \epsilon^2 / 2$.
In our particular case, this bound can be improved by relating the inconsistency to a topological property of the graph leading to possible inconsistencies: the first Betti number $\beta_1 = \abs{E} - \abs{V} + 1$, i.e., the number of simple cycles in the graph. Notice that there are exactly $\beta_1$ edges to remove to the graph in order to have a tree.
\begin{mylma}
\label{lem:IneqLambda0}
 Consider a connected directed graph $\graph$, and let $T$ be any spanning tree of $\graph$, and $\bar{T} = E \setminus T$. We have the following inequality
 \begin{equation}
 \label{eq:IneqLambda0}
  \lambda\dg_0 \leq \epsilon_{g}^2 \frac{\sum_{\set{i, j} \in \bar{T}} \sym{w}_{ij}}{\vol{\graph}} ,
 \end{equation}
  where the summation includes only $\beta_1$ terms (associated with the number of simple cycles).
  In particular, if the weights are binary $ \lambda\dg_0 \leq \frac{\epsilon_{g}^2 \beta_1}{2 \abs{E}}$.
\end{mylma}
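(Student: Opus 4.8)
The plan is to combine the unitary equivalence of Lemma~\ref{TreeLemma} with the Courant--Fischer characterization of the smallest generalized eigenvalue, tested against the constant vector $\boldsymbol{1} = \prns{1,\dotsc,1}^\intercal$. I would start from $\lambda\dg_0 = \min_{f\neq 0}\prns{f^\dagger L\dg f}/\prns{f^\dagger D f}$. By Lemma~\ref{TreeLemma}, $L\dg = U\subT\tilde L\dg\subT U\subT^\dagger$ with $U\subT$ diagonal and unitary; since $U\subT$ commutes with the diagonal matrix $D$, the substitution $f = U\subT g$ yields $\lambda\dg_0 = \min_{g\neq 0}\prns{g^\dagger\tilde L\dg\subT g}/\prns{g^\dagger D g}$, so it suffices to bound the Rayleigh quotient of $\tilde L\dg\subT$ from above at a convenient $g$.

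Next I would compute the quadratic form of $\tilde L\dg\subT$. Writing $\tilde L\dg\subT = D - M$ with $M$ Hermitian, $\absl{M_{ij}} = \sym w_{ij}$, $M_{ii} = 0$ and $\sum_j\absl{M_{ij}} = d_i$, the standard identity for such operators (the same manipulation that gives the frustration identity $E_{A,\bar A}\prns{\btheta^\star} = \prns{\fpart^\dagger L\dg\fpart}/\prns{\fpart^\dagger D\fpart}$, now for the gauge that is trivial along $T$) gives, summing over \emph{unordered} edges,
\begin{equation*}
 g^\dagger\tilde L\dg\subT g = \sum_{\set{i,j}\in T}\sym w_{ij}\,\absl{g_i - g_j}^2 + \sum_{\set{i,j}\in\bar T}\sym w_{ij}\,\absl{g_i - t\dg_{\circ, ij}\, g_j}^2 .
\end{equation*}
Evaluating at $g = \boldsymbol{1}$ annihilates every tree term, and each remaining term is $\sym w_{ij}\,\absl{1 - t\dg_{\circ, ij}}^2$. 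By the remark after Lemma~\ref{TreeLemma}, $t\dg_{\circ, ij}$ is the holonomy $t\dg_C$ of the simple cycle $C$ made of $\set{i,j}$ and the unique tree path between its endpoints, so $\epsilon_g$-consistency gives $\absl{1 - t\dg_{\circ, ij}}\le\epsilon_g$. Using $\boldsymbol{1}^\dagger D\,\boldsymbol{1} = \sum_i d_i = \vol{\graph}$ this produces
\begin{equation*}
 \lambda\dg_0 \le \frac{\sum_{\set{i,j}\in\bar T}\sym w_{ij}\,\absl{1 - t\dg_{\circ, ij}}^2}{\vol{\graph}} \le \epsilon_g^2\,\frac{\sum_{\set{i,j}\in\bar T}\sym w_{ij}}{\vol{\graph}} ,
\end{equation*}
which is~\eqref{eq:IneqLambda0}; the number of terms is $\absl{\bar T} = \absl{E} - \prns{\absl{V}-1} = \beta_1$.

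I expect no serious obstacle: once Lemma~\ref{TreeLemma} is available this is a one-line Rayleigh estimate. The two points deserving care are (i) the ordered/unordered edge bookkeeping in the quadratic form, where the factor $1/2$ from counting each undirected edge twice cancels the $2$ in $\sym w_{ij} = \prns{w_{ij}+w_{ji}}/2$, and one checks that the two orientations of an edge in $\bar T$ contribute the same value since $t\dg_{\circ, ji} = \overline{t\dg_{\circ, ij}}$; and (ii) the verification that the loop carrying $t\dg_{\circ, ij}$ is a genuine simple cycle, so that $\epsilon_g$-consistency indeed bounds $\absl{1 - t\dg_{\circ, ij}}$. For the binary specialization one takes the underlying undirected graph to be unweighted, so $\sym w_{ij} = 1$ on every edge, hence $\sum_{\set{i,j}\in\bar T}\sym w_{ij} = \beta_1$ and $\vol{\graph} = \sum_i d_i = 2\absl{E}$; substituting into the bound gives $\lambda\dg_0 \le \epsilon_g^2\beta_1/(2\absl{E})$.
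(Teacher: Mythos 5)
Your proposal is correct and follows essentially the same route as the paper: the paper likewise invokes Lemma~\ref{TreeLemma} to pass to $\tilde{L}\dg\subT$ and evaluates the Rayleigh quotient $R\prns{f}$ at the constant vector $f_i = 1$, so that only the $\bar{T}$ terms $\sym{w}_{ij}\abs{1 - t\dg_{\circ,ij}}^2 \le \epsilon_g^2 \sym{w}_{ij}$ survive over the denominator $\vol{\graph}$. The extra bookkeeping you flag (ordered versus unordered edges, $t\dg_{\circ,ji} = \overline{t\dg_{\circ,ij}}$, and the binary specialization via $\vol{\graph} = 2\abs{E}$) is all consistent with, though left implicit in, the paper's one-line argument.
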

\begin{proof}
 Fix a spanning tree $T$ of $\graph$. Using Lemma~\ref{TreeLemma}, we can consider the smallest eigenvalue of the unitary equivalent operator $\tilde{L}\dg\subT$, which is the minimum of the Rayleigh quotient $R\prns{f} = \sum_{i} f_i^\dagger\prns{\tilde{L}\dg\subT f}_i / \sum_{i} d_i \abs{f_i}^2$. Specifically, choosing $f_i = 1$ for all $i \in V$, we have
 \begin{equation*}
  \lambda\dg_0 \leq R(f) = \frac{\sum_{\set{i, j} \in \bar{T}} \sym{w}_{ij} \abs{f_i - t\dg_{\circ,ij} f_j}^2}{\sum_{i \in V} d_i} = \frac{\sum_{\set{i, j} \in \bar{T}} \sym{w}_{ij} \abs{1 - t\dg_{\circ, ij}}^2}{\vol{\graph}} \le \epsilon_{g}^2 \frac{\sum_{\set{i, j} \in \bar{T}} \sym{w}_{ij}}{\vol{\graph}} .
 \end{equation*}
\end{proof}

The bound given in \eqref{eq:BoundLambda0} suggests to reduce the variability of the modulus (the information lost when considering only the phases) by making $\lambda\dg_0$ smaller. According to Lemma~\ref{lem:IneqLambda0}, this can be done looking for the $g$ that minimizes the inconsistency, which is a task dependent on each particular graph.
Moreover, the results above are only upper bounds, so there is no guarantee that such a value of $g$ is the best possible choice.
Finally, notice that we are interested in maximizing the information included in the phase of the eigenvector, not only in minimizing the information lost while ignoring the modulus. As a trivial counterexample, if we take $g \to 0$ then we will recover the combinatorial Laplacian, and hence $\lambda\dg_0 \to 0$. Although in that case the eigenvector will be constant in modulus, and no information will be lost, its phase will also be constant, and thus it will provide no information.

\subsection{Selection of \texorpdfstring{$g$}{g}}
\label{SecSG}

The choice of the electric charge parameter $g$ influences the visualization method. As stated above, there is not an established method to select it.
In general, we propose to choose a quantized charge $g = k/m$ with $k \notin m \Z$. The particular value $g = 1/3$ is suited in the presence of directed triangles, while $g = 1/4$ is relevant in the presence of directed $4$-cycles. For $g = 2 / 5$, the magnetic Laplacian becomes quite similar with the signed Laplacian, while for $g = 1 / 2$ it is a signed Laplacian associated with the same graph where undirected edges are labelled as $(+)$ and directed edges as $(-)$. For more details we refer to~\cite{MagneticEigenMapsCommunityDetection}. Notice that choosing $g > 1 / 2$ would be equivalent to a flip of all link directions.

\subsection{Connection with Vector Diffusion Maps}
\label{SecVDM}

The computation of the eigenvectors of the normalized magnetic Laplacian can resemble the Vector Diffusion Maps of Singer and Wu~\cite{SingerWu}, although in our case we work with a complex and unitary transporter in $\gU{1}$, instead of with an orthogonal transporter. 
In particular, the main difference between both approaches, apart from the working space ($\R^n$ and $\C$, respectively), resides in the transport term. In the case of Vector Diffusion Maps, it is an element of $\gSO{n}$ determined by Local PCA, and for Magnetic Eigenmaps it is an element of $\gU{1}$ which can be tuned by the user through the parameter $g$ in order to highlight certain properties of the graph.
Moreover, the methodology of both approaches differs in the way they map the data.
On the one hand, Vector Diffusion Maps follows a natural extension of Diffusion Maps~\cite{Coifman05geometricdiffusions}, so that each point is mapped to a matrix defined in terms of the eigenvalues and eigenvectors of the transition matrix.
On the other hand, the proposed magnetic eigenmaps map the points to the phases of the first eigenvectors of the corresponding Laplacian.
Therefore, although both methods share some similarities, they are essentially different, and none of them can be seen as a particular case of each other.

\section{Applications}
\label{SecApp}

We will illustrate now how Magnetic Eigenmaps can be applied to the visualization of directed graphs on two synthetic and two real networks.
For completeness, we include the procedure to compute the magnetic eigenmaps for a directed graph with binary weights in Algorithm~\ref{AlgMethod}.
In all the examples we will depict the aspect of the original network as a baseline, using for this purpose expert knowledge or a force-directed layout (based on using attractive forces between adjacent nodes and repulsive forces between distant nodes).
We will also compare with Diffusion Maps for the real examples. In this context, the embedding is obtained by computing the algorithm with $g = 0$, and plotting the first eigenvectors of the corresponding Laplacian (instead of their phases).
For the real networks we will also depict the spectrum.

\begin{algorithm}[t]
\caption{\label{AlgMethod} Magnetic Eigenmaps Visualization}
\begin{footnotesize}
\begin{algorithmic}[0]
 \Procedure{MEigenmaps}{$W, g$}
  \State $\sym{W} \gets \prn{W + W^\intercal} / 2$ \Comment{Symmetric weights.}
  \State $A \gets W - W^\intercal$ \Comment{Edge flow.}
  \State $d_{ii} \gets \sum_j {\sym{w}_{ij}}$ \Comment{Degree matrix.}
  \State $t\dg_{ij} \gets e^{\im 2 \pi g a_{ji}}$ \Comment{Transporter.}
  \State $L\dg \gets D - \sym{W} \hadam T\dg$ \Comment{Magnetic Laplacian.}
  \State $L\dg\nr \gets D^{-1/2} L\dg D^{-1/2}$ \Comment{Normalized Laplacian.}
  \State $\phi_0\dg, \phi_1\dg, \dotsc \gets \textproc{Eigs}\prns{L\dg\nr}$ \Comment{Eigenvectors.}
  \State\Return $\set{\phase{\phi_m\dg}}_{m = 0}^n$ \Comment{Phases.}
 \EndProcedure
\end{algorithmic}
\end{footnotesize}
\end{algorithm}

\subsection{Artificial networks}

We propose to visualize first the artificial network with a running flow of Figure~\ref{FigAffGroupsOrig}, where the coordinates of the nodes in the real plane have been chosen according to our knowledge about the underlying groups. This network, constructed according to~\cite{Lancichinetti}, consists of three groups of ten nodes ($A$, $B$ and $C$). Two nodes in the same group are linked with a probability $0.5$. Any node has also a probability $0.5$ to be connected to a node from another group. Furthermore $90$ percent of these interconnections are directed in the direction of the flow, i.e., $A \to B$, $B \to C$ and $C \to A$.
Plotting the real and imaginary parts of the first eigenvector of the magnetic Laplacian can indicate the presence of a running flow in the network, as illustrated in Figure~\ref{FigAffGroupsReIm}.
However there could also be dense clusters in the network that are not revealed using only the phase of the first eigenfunction.
In order to actually visualize the three groups and the density information, we use our proposed method of plotting the complex phase of the first eigenfunction versus the phase of the second eigenfunction of the magnetic Laplacian, as illustrated in Figure~\ref{FigAffGroupsME}. Notice that the phase of the second eigenfunction does not distinguish specific dense clusters in the network, while the phase of the first eigenfunction (corresponding to directionality) is able to separate the three groups.

\begin{figure}[t]
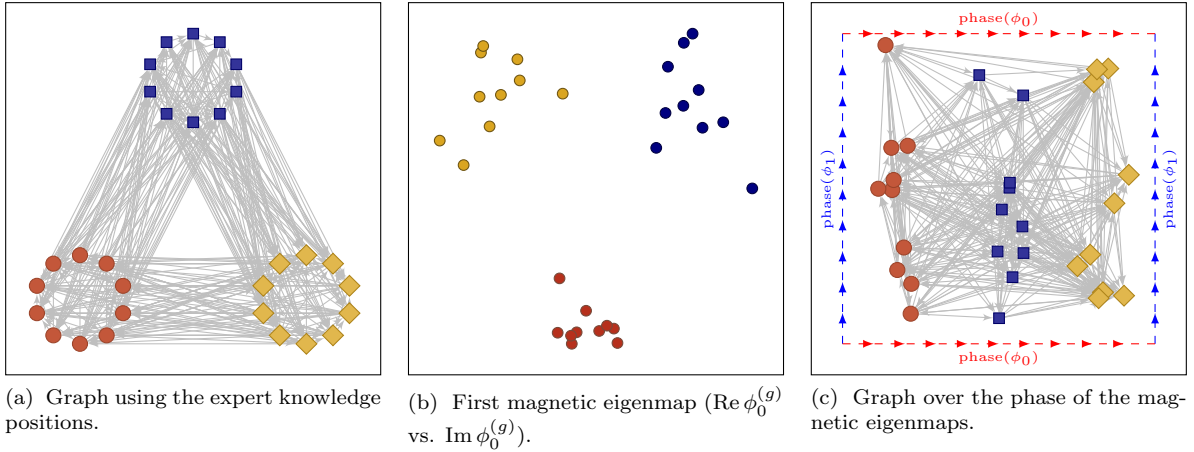

 \centering
 \tikzwidth{0.3\textwidth}
 \subfloat[\label{FigAffGroupsOrig} Graph using the expert knowledge positions.]{\noplottorus\includetikz{Graph_AffGroups_EK}}\quad%
 \subfloat[\label{FigAffGroupsReIm} First magnetic eigenmap ($\real{\phi\dg_0}$ vs. $\imag{\phi\dg_0}$).]{\includetikz{ReIm_AffGroups}}\quad%
 \subfloat[\label{FigAffGroupsME} Graph over the phase of the magnetic eigenmaps.]{\plottorus{0}{1}\includetikz{Graph_AffGroups_ME}}
 \caption{\label{FigAffGroups} Artificial network with a running flow. The colours indicate the three groups, and the magnetic eigenmaps correspond to $g = 1/4$.}
\end{figure}

We are going to consider now an example of a network with a small number of nodes playing a particular role and then a clear structure with two dense clusters. In~\cite{LeichtNewman}, an artificial network of $32$ nodes is built as follows: it consists of two dense groups of $14$ nodes with a few interconnecting links and two pairs of nodes are connected to the whole network. The first pair has only in-coming links, while the second pair has only out-going links.
An illustration using the force layout is given in Figure~\ref{FigNewmanOrig}. Plotting the real and imaginary parts of the first eigenfunction allows to distinguish the two pairs from the rest of the network, as showed in Figure~\ref{FigNewmanReIm}.
However, it is more instructive to visualize the network using the phase of the two first eigenfunctions of the magnetic Laplacian. Indeed, the two groups and the two pairs are easily separated in Figure~\ref{FigNewmanME}, where the phase of the first eigenfunction (directionality) is able to separate the two pairs of disconnected points from the rest of the set, defining three directionality-groups: green points, the yellow points and the blue and red points together. On the other side, the phase of the second eigenfunction (corresponding to density information) shows two groups: the blue and green points versus the red and yellow points. Combining the information given by the two phases we are able to easily separate visually the four groups that we were looking for.

\begin{figure}[t]
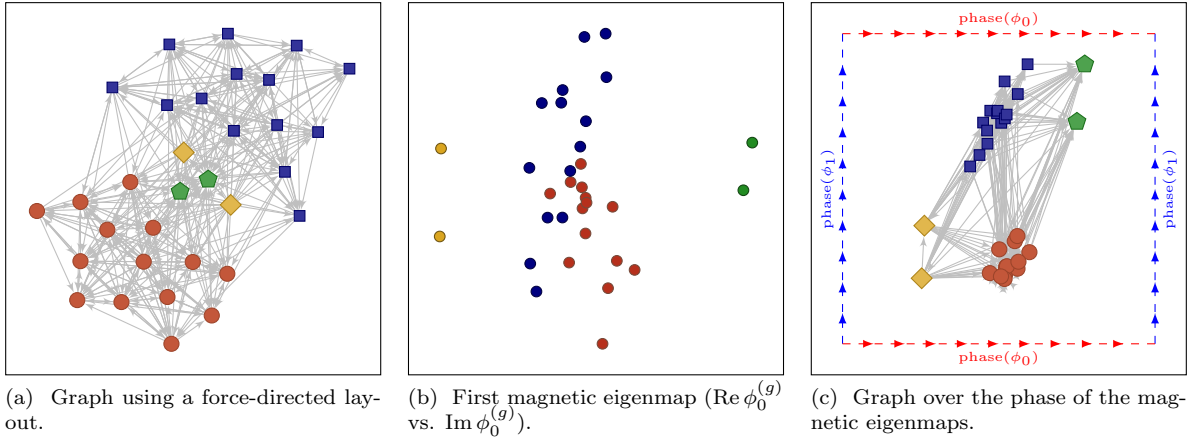

 \centering
 \tikzwidth{0.3\textwidth}
 \subfloat[\label{FigNewmanOrig} Graph using a force-directed layout.]{\noplottorus\includetikz{Graph_Newman_Auto}}\quad%
 \subfloat[\label{FigNewmanReIm} First magnetic eigenmap (\smash{$\real{\phi\dg_0}$} vs. \smash{$\imag{\phi\dg_0}$}).]{\includetikz{ReIm_Newman}}\quad%
 \subfloat[\label{FigNewmanME} Graph over the phase of the magnetic eigenmaps.]{\plottorus{0}{1}\includetikz{Graph_Newman_ME}}
 \caption{\label{FigNewman} Artificial network with two dense clusters and two pairs of nodes with a specific role. The colours indicate the dense clusters and the hub pairs, and the magnetic eigenmaps correspond to $g = 1/4$.}
\end{figure}

\subsection{Directed networks from real data}

In the previous section, we considered directed networks having known structures either in terms of link directions or link density. Indeed, Magnetic Eigenmaps is able to provide simultaneously information about direction and density, as we illustrate now also on real directed networks where these two aspects are relevant.
Let us emphasize that we do not pretend that Magnetic Eigenmaps will give the best possible result on each dataset. Knowing the nature of the datasets, an \emph{ad hoc} visualization method will certainly give a better result. We will however assume that the origin of the dataset is unknown in order to show that Magnetic Eigenmaps indeed reveals relevant features.

The network used represents the \emph{common adjective and noun adjacencies} for the novel ``David Copperfield'' by Charles Dickens~\cite{NewmanPREEigenvectors}. This directed graph has $112$ nodes that represent the most commonly occurring adjectives and nouns in the book. Edges connect any pair of words that occur in adjacent position in the text of the book. From the structure of English, a certain directional structure can be anticipated, i.e., adjectives are expected to be found before nouns.
The graph representation of this dataset, using a force layout, is shown in Figure~\ref{FigWordAdjOrig}, where the structure can hardly be guessed. However, considering the phase of the two first magnetic eigenmaps (we have used $g = 2 / 5$ since the graph is almost bipartite), it is possible to visualize the presence of two groups, as illustrated in Figure~\ref{FigWordAdjME} (indeed, the information is provided mostly by the first coordinate, corresponding to directionality).
Finally, we show in Figure~\ref{FigWordAdjDM} the first embedding coordinates in the Diffusion Maps case (using the normalized Laplacian), more concretely the second and third eigenvectors (the first one is discarded because it is constant). In this case both classes appeared mixed, so these two diffusion coordinates are not able to reveal the structure of the data. The reason is that Diffusion Maps captures the density of the links but not the directionality, while Magnetic Eigenmaps relates both.
Additionally, Figure~\ref{FigWordAdjEigs} represents the first eigenvalues of the combinatorial ($g = 0$) and magnetic Laplacian ($g = 2/5$). The increase of these eigenvalues is smooth, presenting just an eigengap between the first eigenvalue and the second one. This remark motivates the choice of  the first and second eigenvectors for this example as visualization coordinates. Moreover, the differences between the two spectra, and the nonzero initial eigenvalue, show that the structure of the graph is not trivial (see Proposition~\ref{ThZeroEig}).

\begin{figure}[t]
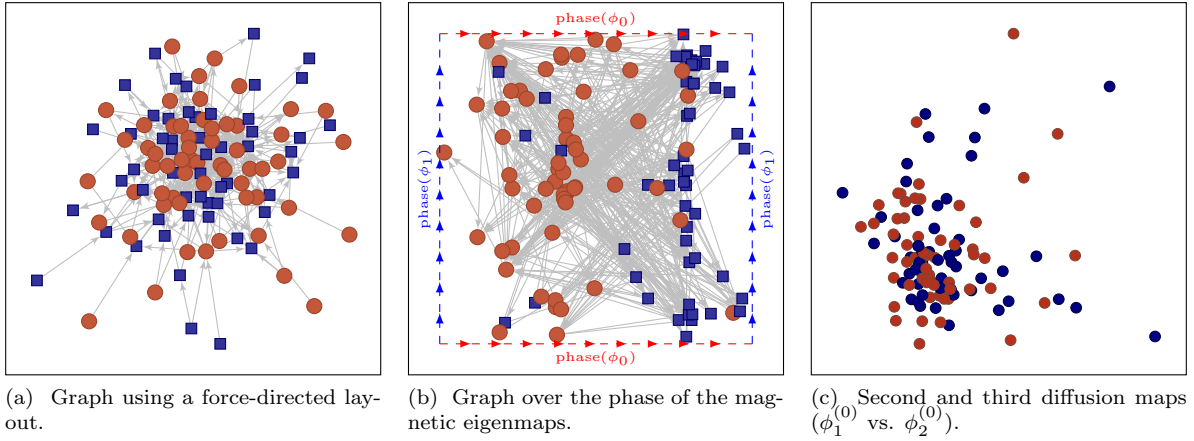

 \centering
 \tikzwidth{0.3\textwidth}
 \subfloat[\label{FigWordAdjOrig} Graph using a force-directed layout.]{\noplottorus\includetikz{Graph_WordAdjacencies_Force}}\quad%
 \subfloat[\label{FigWordAdjME} Graph over the phase of the magnetic eigenmaps.]{\plottorus{0}{1}\includetikz{Graph_WordAdjacencies_ME}}\quad%
 \subfloat[\label{FigWordAdjDM} Second and third diffusion maps (\smash{$\phi\dz_1$} vs. \smash{$\phi\dz_2$}).]{\includetikz{DM_WordAdjacencies}}
 \caption{\label{FigWordAdj} Word adjacencies example. The colours indicate the class labels, nouns~\showcolor{graphic1} and adjectives \showcolor{graphic2}, and the magnetic eigenmaps correspond to $g = 2/5$.}
\end{figure}

\begin{figure}[t]
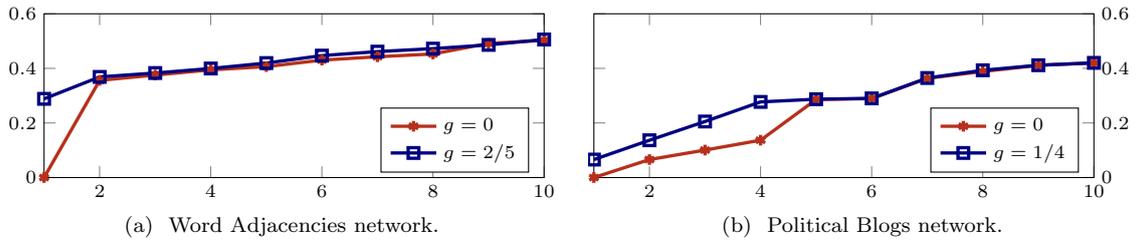

 \centering
 \tikzwidth{0.4\textwidth}
 \subfloat[\label{FigWordAdjEigs} Word Adjacencies network.]{\includetikz{Eigs_WordAdjacencies}}\quad%
 \subfloat[\label{FigPolBlogsEigs} Political Blogs network.]{\includetikz{Eigs_polBlogs}}
 \caption{First eigenvalues of the combinatorial and normalized magnetic Laplacian for the real networks.}
\end{figure}

The second real dataset used in these experiments represents the \emph{political blogosphere} in February of 2005~\cite{polBlogsDataSet}. This directed graph is composed of \num{1222} nodes that indicate the political leaning, meaning left or liberal and right or conservative (disconnected points were removed). The data on political leaning comes from blog directories and some of the blogs were labelled manually, based on incoming and outgoing links and posts around the time of the 2004 presidential election in the USA. The links between blogs were automatically extracted from a crawl of the front page of the blog.
>From Figure~\ref{FigPolBlogsOrig}, where the network has been depicted using the force layout, it is already possible to guess the presence of two dense groups of webpages. The first eigenvalues of the magnetic Laplacian in Figure~\ref{FigPolBlogsEigs} instruct us to consider the two first pairs of eigenvalues in order to visualize two different structures.
In Figures~\ref{FigPolBlogsME12} and~\ref{FigPolBlogsME13}, the magnetic eigenmaps do not distinguish the two classes of nodes, however we observe that some webpages are less connected to the rest of the network, whereas in Figure~\ref{FigPolBlogsME14} we see the two classes clearly separated. The latter mapping is also shown in $\R^3$ over the torus as an illustration in Figure~\ref{FigPolBlogsTorus}.
For comparison purposes, we show also in this case the first embedding diffusion coordinates in Figures~\ref{FigPolBlogsDM23} and~\ref{FigPolBlogsDM45}. In this example, where the graph structure is clearer than in the previous dataset, Diffusion Maps is able to condense the two classes separately just using the density of the graph. We would like to highlight that the distinction between both classes is not very clear when we select the second and third eigenvectors (the first eigenvector is again discarded), whereas a cleaner classification structure is obtained when the fourth and fifth eigenvectors are depicted. Nevertheless, Magnetic Eigenmaps represents well both the connectivity and density structure.

\begin{figure}[t]
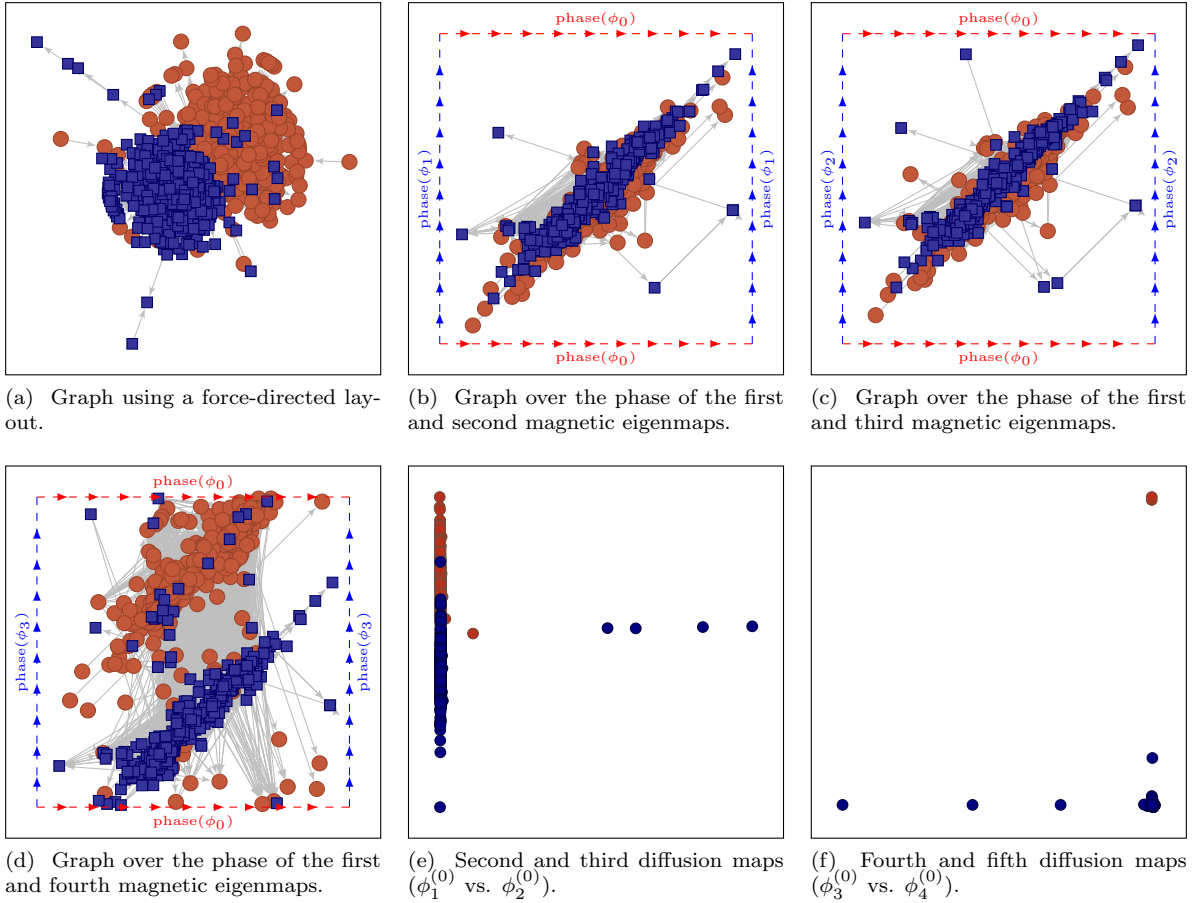

 \centering
 \tikzwidth{0.3\textwidth}
 \subfloat[\label{FigPolBlogsOrig} Graph using a force-directed layout.]{\noplottorus\includetikz{Graph_polBlogs_Force}}\quad%
 \subfloat[\label{FigPolBlogsME12} Graph over the phase of the first and second magnetic eigenmaps.]{\plottorus{0}{1}\includetikz{Graph_polBlogs_ME12}}\quad%
 \subfloat[\label{FigPolBlogsME13} Graph over the phase of the first and third magnetic eigenmaps.]{\plottorus{0}{2}\includetikz{Graph_polBlogs_ME13}}\\
 \subfloat[\label{FigPolBlogsME14} Graph over the phase of the first and fourth magnetic eigenmaps.]{\plottorus{0}{3}\includetikz{Graph_polBlogs_ME14}}\quad%
 \subfloat[\label{FigPolBlogsDM23} Second and third diffusion maps ($\smash{\phi\dz_1}$ vs. $\smash{\phi\dz_2}$).]{\includetikz{DM_polBlogs_23}}\quad%
 \subfloat[\label{FigPolBlogsDM45} Fourth and fifth diffusion maps ($\smash{\phi\dz_3}$ vs. $\smash{\phi\dz_4}$).]{\includetikz{DM_polBlogs_45}}\\
 \caption{\label{FigPolBlogs} Political blogosphere example. The colours indicate the class labels, left leaning \showcolor{graphic1} and right leaning \showcolor{graphic2}, and the magnetic eigenmaps correspond to $g = 1/4$.}
\end{figure}

\section{Conclusions}
\label{SecConcl}

In this letter, we have proposed the use of the eigenvectors of the magnetic Laplacian, called here \emph{Magnetic Eigenmaps}, for the visualization of directed networks. Our work is a natural extension of the Laplacian Eigenmaps.
Computationally, the method reduces to the calculation of the eigenvectors of maximal eigenvalues of a Hermitian matrix, which can be conveniently performed thanks to e.g. the power method.
The advantages of this approach were illustrated on artificial and real datasets, showing that our method is able to reveal both the directionality and connectivity patterns of the networks.

\section*{Acknowledgements}

\begin{small}
The authors thank the following organizations.
\begin{itemize*}
 \item EU: The research leading to these results has received funding from the European Research Council under the European Union's Seventh Framework Programme (FP7/2007-2013) / ERC AdG A-DATADRIVE-B (290923). This paper reflects only the authors' views, the Union is not liable for any use that may be made of the contained information.
 \item Research Council KUL: GOA/10/09 MaNet, CoE PFV/10/002 (OPTEC), BIL12/11T; PhD/Postdoc grants.
 \item Flemish Government:
 \begin{itemize*}
  \item FWO: G.0377.12 (Structured systems), G.088114N (Tensor based data similarity); PhD/Postdoc grants.
  \item IWT: SBO POM (100031); PhD/Postdoc grants.
 \end{itemize*}
 \item iMinds Medical Information Technologies SBO 2014.
 \item Belgian Federal Science Policy Office: IUAP P7/19 (DYSCO, Dynamical systems, control and optimization, 2012-2017).
\end{itemize*}
\end{small}


\bibliographystyle{elsarticle-num}
\bibliography{References}

\end{document}